\renewcommand*{\pagenotesubhead}[1]{}
\newcommand{\ket}[1]{\ensuremath{|#1\rangle}}
\author{Jackson Morris}{Department of Mathematics, University of California Los Angeles}{jrexmo@ucla.edu}{}{}
\author{Fang Song}{Department of Computer Science, Portland State
  University}{fsong@pdx.edu}{}{}{}
\begin{document}


\title{Simple vertex coloring algorithms}
\authorrunning{J. Morris and F. Song}
\ccsdesc{Theory of computation~Design and analysis of algorithms}
\ccsdesc{Theory of computation~Quantum computation theory}

\keywords{graph coloring, quantum algorithms, query model}
\maketitle
\begin{abstract}
  Given a graph $G$ with $n$ vertices and maximum degree $\Delta$, it
  is known that $G$ admits a vertex coloring with $\Delta + 1$ colors
  such that no edge of $G$ is monochromatic. This can be seen
  constructively by a simple greedy algorithm, which runs in time
  $O(n\Delta)$. 
  Very recently, a sequence of results (e.g., [Assadi et. al. SODA'19,
  Bera et. al. ICALP'20, AlonAssadi Approx/Random'20]) show randomized
  algorithms for $(\epsilon + 1)\Delta$-coloring in the query model
  making $\tilde{O}(n\sqrt{n})$ queries, improving over the greedy
  strategy on dense graphs.  In addition, a lower bound of
  $\Omega(n\sqrt n)$ for any $O(\Delta)$-coloring is established on
  general graphs.
  
  In this work, we give a simple algorithm for
  $(1 + \epsilon)\Delta$-coloring. This algorithm makes
  $O(\epsilon^{-1/2}n\sqrt{n})$ queries, which matches the best
  existing algorithms as well as the classical lower bound for
  sufficiently large $\epsilon$. Additionally, it can be readily
  adapted to a quantum query algorithm making
  $\tilde{O}(\epsilon^{-1}n^{4/3})$ queries, bypassing the classical
  lower bound.

  Complementary to these algorithmic results, we show a quantum lower
  bound of $\Omega(n)$ for $O(\Delta)$-coloring.
  
\end{abstract}
\newpage
\newcommand{\vc}{\text{vc}}

\section{Introduction}
\label{sec:intro}

Graph coloring is a fundamental problems in discrete algorithms and
graph theory. It has wide applications in scheduling, resource
allocation, compiler optimization, and logistics settings. In this
problem, one aims to assign every vertex a color such that no edge is
\emph{monochromatic}, i.e., both endpoints having the same
color. Finding such a valid coloring with $k \geq 1$ colors is usually
called the $k$-coloring problem. A critical graph parameter $\chi(G)$
is the minimum number of colors necessary to guarantee a valid
coloring. In Karp's renowned result, 3-coloring as well as deciding
$\chi(G)$ in general are proven NP-complete~\cite{Karp72}. On the
other hand, it is known that for any graph $G$,
$\chi(G) \leq \Delta + 1$ where $\Delta$ is the maximum degree of
$G$. This follows from a simple greedy algorithm for
$\Delta + 1$-coloring. The basic observation is that any valid partial
$\Delta + 1$-coloring of $G$ can be extended to a complete
$\Delta + 1$-coloring. This is because every vertex has at most
$\Delta$ neighbors, there must be at least one choice of color for
that vertex that does not conflict with any of its neighbors.

For quite some time, the $O(n\Delta)$ greedy algorithm is the best
known algorithm for $\Delta + 1$ vertex coloring on general
graphs. Very recently, authors of~\cite{ACK19} give a new
$\tilde{O}(n\sqrt{n})$ algorithm in the \emph{query} model, where the
graph's adjacency matrix is given as a black-box. This algorithm
relies on a tool called palette sparsification, in which a small
number of colors are sampled at every vertex and a valid coloring is
found from there. These tools are ingenious but at the same time
rather complicated, when compared to the vanilla greedy algorithm and
looking ahead the algorithms presented in this paper. Additionally,
the authors of \cite{alon2020palette} have extended this theory of
palette sparsification, achieving improved results for when $G$ is
triangle free and for $(1 + \epsilon)\Delta$-coloring in
general. Another important line of research concerns solving these
problems in a space-efficient manner: $\cite{ACK19, space-concious}$
develop algorithms for $\Delta + 1$-coloring and related problems
using a (near) linear amount of space. Meanwhile An almost matching
lower bound of $\Omega(n\sqrt{n})$ is also known, implying that their
algorithm is optimal up to $\text{polylog}(n)$
factors. 

\subparagraph{Our Contribution.} In this work, we revisit the graph
coloring problem, in both classical and quantum query models. 
We give a new randomized algorithm for $(1 + \epsilon)\Delta$ coloring
which is optimal for sufficiently large $\epsilon$, and a quantum
variant of this algorithm follows readily that bypasses the classical
lower bound in~\cite{ACK19}. The most appealing feature of our
algorithms is their simplicity. Basically, we are inspired by the
$\Delta+1$ greedy algorithm, where the algorithms progresses by
extending a partial solution \emph{locally}, and it is suitable to
speed up ``local'' search subroutine by quantum unstructured
search. In particular, we do not need the heavy machinery such as
palette sparsification and list-coloring in existing classical
algorithms. We overview our results below, and see also
Table~\ref{tab:sum} for a summary.

\begin{mdframed}[style=figstyle,innerleftmargin=10pt,innerrightmargin=10pt]
\textbf{Result 1}: A randomized algorithm for
$(1 + \epsilon) \Delta$-coloring which makes
$O(\epsilon^{-1/2}n\sqrt{n})$ queries in expectation and always returns
a valid coloring.
\end{mdframed}

This basic idea inherits the same general ``local search''
strategy. For each vertex, we sample a random color and check if it
conflicts with any of its neighbors that has been assigned
already. This simple approach is surprisingly
efficient 
We show that this algorithm has expected query complexity
$O(\frac{n^2}{\epsilon\Delta})$, and as in existing works we then just
need to choose between our algorithm and the vanilla greedy algorithm,
which gives the desired
$O(\min\{n\Delta, \frac{n^2}{\epsilon\Delta}\}) =
O(\epsilon^{-1/2}n\sqrt{n})$ expected query complexity. This is
optimal for $\epsilon^{-1/2} = O(1)$ as we show in
Corollary~\ref{cor:lvlb}, and its Monte Carlo variant improves upon
existing algorithms~\cite{alon2020palette} by a factor of
$\epsilon^{-1}$. This algorithm also serves a template for designing
our efficient quantum algorithm. Our new algorithm only needs oracle
access to the adjacency matrix (i.e., pair queries), and the vanilla
greedy algorithm needs access to the adjacency list (i.e., neighbor
queries). Therefore the final algorithm assumes both kinds of oracles
are available.

A central subroutine in our algorithm, which is again extremely
simple, is what we call \textbf{Find-Conflict}. It takes a vertex $v$,
a set of vertices $S$ such that $v \not \in S$ and uses the graph
oracle to determine if there are any edges of the form $(u, v)$
actually present in $E(G)$ for any of the $u \in S$. Classically, this
amounts to an exhaustive search on all such vertices $s \in S$.

\begin{mdframed}[style=figstyle,innerleftmargin=10pt,innerrightmargin=10pt]
\textbf{Result 2}: A quantum algorithm which makes $\tilde{O}(\epsilon^{-1}n^{4/3})$ queries and returns a valid $(1 + \epsilon)\Delta$-coloring with high probability. 
\end{mdframed}

Achieving this result essentially has the same strategy as that of
result 1, while noting that \textbf{Find-Conflict} can be instantiated
by (variants of) Grover's quantum search
algorithm~\cite{grover,Safe-Grover,Boyer_98}, under the standard
quantum query model where the adjacency matrix can be queried in
quantum superposition. The analysis nonetheless requires more care to
carefully contain the errors. 

We remark that although we rely on well-known quantum algorithmic
techniques, it is crucial that they get employed in the right
place. For instance, one may be tempted to perform a search over just
the color palette, thus requiring $O(\sqrt{\Delta}\log\Delta)$
queries, but such an approach would need a rather powerful ``color
validity'' oracle for every vertex, which needs to be dynamically
updated as various colors become forbidden for some
vertices. Maintaining these oracles from standard oracles (adjacency
matrix or adjacency list) would incur a significant overhead that
would wipe out possible quantum speedup.

\begin{mdframed}[style=figstyle,innerleftmargin=10pt,innerrightmargin=10pt]
\textbf{Result 3}: Quantum query lower bound of $\Omega(n)$ for $O(\Delta)$ coloring, when (quantum) pair queries are allowed.
\end{mdframed}

This is shown by concocting specific graph instances and reducing the
unstructured search problem to them, when only adjacency matrix queries are available. Therefore the $\Omega(\sqrt{n})$
quantum query lower bound for unstructured
search~\cite{grover-optimal} will transfer.

\begin{table}[H]
  \centering
  \begin{tabular}{ |c|c|c|c|  }    
 \hline
Problem & Quantum Algorithm & Classical Algorithm & \shortstack{Classical \\ Lower Bound}\\
 \hline\hline
 $2\Delta$-coloring   & $\tilde{O}(n^{4/3})$ & $\tilde{O}(n\sqrt{n})$, $\tilde{O}(n\sqrt{n})$ \cite{ACK19}&   $\Omega(n\sqrt{n})$ \cite{ACK19}\\ \hline
 $(1 + \epsilon)\Delta$-coloring &  $\tilde{O}(\epsilon^{-1}n^{4/3})$& ${O}(\epsilon^{-1/2}n\sqrt{n}\log{n})$, \newline${O}(\epsilon^{-3/2}n\sqrt{n\log{n}})$ \cite{alon2020palette}   &$\Omega(n\sqrt{n})$ \cite{ACK19}\\
 \hline
\end{tabular}
\caption{Summary of results - no reference indicates a result from
  this work. The $2\Delta$-coloring problem occurs when $\epsilon = 1$ and we include this case to demonstrate the quantum advantage. Since any $\Delta + 1$-coloring is also a $2\Delta$-coloring by definition, we use \cite{ACK19} for classical $2\Delta$-coloring, but this comparison is not of great importance.  While our randomized algorithm is given as a Las Vegas
  algorithm, it can easily be converted to Monte-Carlo with success
  probability at least $1 - 1/n^k$ for any $k$ using only
  $O(\epsilon^{-1/2}n\sqrt{n}\log{n})$ queries. Note that prior to
  this work the best know algorithm for
  $(1 + \epsilon)\Delta$-coloring from \cite{alon2020palette} has
  query complexity ${O}(\epsilon^{-3/2}n\sqrt{n\log{n}})$ compared to
  our algorithm's ${O}(\epsilon^{-1/2}n\sqrt{n}\log{n})$. Note that
  the former performs better for $\epsilon^{-1} \leq \sqrt{\log{n}}$,
  while the latter performs better for
  $\epsilon^{-1} \geq \sqrt{\log{n}}$.}
\label{tab:sum}
\end{table}
\subparagraph*{Further discussion.} One immediate problem left open by
our results in to close the gap between the algorithmic bounds and the
lower bound. 
Some other well studied graph problems that are likely to
admit similar quantum speedups are variants of coloring (defective
coloring, edge coloring, etc) and potentially some dynamic graph
problems.


A problem that has received considerable study in the quantum query
model is triangle finding. Here one aims to either output a triangle
if one exists or determine that the graph is triangle free with
bounded error (or other variants such as listing all triangles). For a general graph this may require $\Omega(n^2)$
queries to the adjacency matrix oracle. However, a modified Grover
search can be used, searching over triples of vertices, to achieve an
$\tilde{O}(n\sqrt{n})$ quantum algorithm for triangle finding as in
\cite{older_triangle}. More sophisticated arguments in
\cite{quantum_triangle} improve upon this resulting, bring the query
complexity down to $\tilde{O}(n^{5/4})$. Conversely, the best know
lower bound for triangle finding in the quantum query model is the
immediate $\Omega(n)$ bound established in \cite{non-adaptive}. Just
as in triangle finding, a gap persists between the best known quantum
lower bound and the best known quantum algorithm for $O(\Delta)$
coloring as shown in our work. 

In addition, the problem of \emph{maximum matching} in which one wishes to find the largest
maximal matching, has been studied in the quantum query model. The very recent results of
\cite{maximum-matching} gives an improved quantum algorithm for maximum matching in the adjacency
matrix model which makes $O(n^{7/4})$ queries. However, a lower bound of $\Omega(n^{3/2})$
queries in the adjacency matrix model established in \cite{graph_lb} show that a gap persists
between the best known algorithm and lower bound for this problem as well. Other structural graph
problems such as
those relating to connected components and spanning forests have also
been studied. A recent work~\cite{cut-query}, surprisingly, shows
exponential quantum speedup for these problem, assuming a more
sophisticated oracle model that can answer ``cut queries''.


\section{Preliminaries}
\label{sec:prelim}

In this paper the notation $[n]$ will refer to the set
$\{1, 2, \ldots n\}$ for any positive integer $n$. For a positive
integer $L$, the $L$-coloring problem is as follows: given a graph
$G = (V, E)$ and $n = |V|$ we wish to find an assignment
$c:[L] = \{1, \dots L\} \to V$ such that for any edge $(x, y) \in E$
we have $c(x) \not = c(y)$. $L$ will be referred to as the palette
throughout. Such an assignment is called a \emph{valid $L$-coloring of
  $G$}. Let $\Delta$ be the maximum degree of any vertex in $G$. As
previously stated a simple greedy algorithm can be shown to always
produce a valid coloring and runs in time $O(n\Delta)$. Rather than
time complexity, the results in this paper will mostly be stated in
terms of query complexity.  The main models of computation in this
paper are the graph query model and the quantum graph query model. In
the standard graph query model, we assume that $n$ and the degrees of
the vertices is the only knowledge available about $G$ before any
queries have been made. The standard query model for graphs supports
the following types of queries:
\begin{itemize}
\item Pair Queries: we can query the oracle if the edge $(v_i, v_j)$
  is present in the graph for any $i, j \in \{1, \dots n\}$. This will
  be denoted as \[ M[v_i, v_j] = \begin{cases}
      1 & (v_i, v_j) \in E\\
      0 & \text{otherwise}
   \end{cases}
 \]
 This is equivalent to granting oracle access to the adjacency matrix
 $(M_{i,j})$. 

\item Neighbor Queries: we can query the oracle for the $j$th neighbor of vertex $v_i$ for any $i\in \{1, \dots n\}$ and $j \in \{1, \dots \deg{v_i}\}$. When $j > \deg{v_i}$ the query returns $Null$.
\end{itemize}
Similarly this is equivalent to granting oracle access to the adjacency
list of the graph. 

The classic greedy $\Delta + 1$-coloring algorithm can easily be
implemented in the graph query model via neighbor queries. Explicitly,
we can determine which colors are valid at any vertex $v \in V$ with
$\deg{v}$ neighbor queries. 


In the quantum setting, we assume standard quantum query access to the
adjacency matrix. Namely we allow ``pair queries'' in
superposition. More precisely, we encode each vertex $i$ in a
$O(\log n)$-qubit register $\ket{i}$, and assume a black-box unitary
operation

\[ O_M: \ket{i,j} \mapsto (-1)^{M(i,j)}\ket{i,j}, \quad \forall i,j\in
  V(G)\, . \] Each application of this unitary transformation will be
referred to as a quantum query.


Quantum ``neighbor query'' oracle can be defined similarly (i.e.,
black-box unitary computing the adjacency list). However, all our
algorithms will only require \emph{classical} neighbor queries, and
hence we do not specify the quantum oracle explicitly.

Throughout this paper we use the phrase \textit{with high probability}
to mean with probability at least $1 - \frac{1}{n^k}$ for a
sufficiently large constant $k$.

\section{Algorithmic Results}
In this section we give a simple optimal randomized algorithm for $(1 + \epsilon)\Delta$-Coloring. This algorithm serves as a "warm-up" and as the inspiration for the quantum algorithms presented later. We start by transforming a bound on Monte Carlo algorithms for $O(\Delta)$-Coloring to one on Las Vegas algorithms for the same class of problems.
From the lower bounds established in \cite{ACK19}:
\begin{corollary}
Any randomized algorithm which always returns a valid $2\Delta$
coloring requires ${\Omega}(n\sqrt{n})$ queries in expectation.
\label{cor:lvlb}
\end{corollary}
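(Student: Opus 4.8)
The plan is to derive this as the standard consequence of the Monte Carlo lower bound of \cite{ACK19} via a truncation argument: a Las Vegas algorithm with small expected query cost can be converted into a bounded-error algorithm with small worst-case query cost, so a worst-case bounded-error lower bound transfers to an expected-cost lower bound.

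Concretely, suppose toward a contradiction that $A$ is a randomized algorithm that on every $n$-vertex graph always outputs a valid $2\Delta$-coloring and makes $T = T(n)$ queries in expectation with $T = o(n\sqrt n)$ (if the expected cost is infinite there is nothing to prove). First I would build a bounded-error algorithm $A'$ from $A$: on input $G$, simulate $A$ but abort once it has issued $3T$ queries; if $A$ has already terminated, return its coloring, and otherwise return an arbitrary coloring. Since $T$ is the maximum over inputs of the expected number of queries, for each fixed $G$ the query count of $A$ on $G$ has expectation at most $T$, so Markov's inequality gives that $A$ terminates within $3T$ queries with probability at least $2/3$; on that event $A'$ reproduces $A$'s output, which is a valid $2\Delta$-coloring. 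Hence $A'$ always makes at most $3T = o(n\sqrt n)$ queries and outputs a valid $2\Delta$-coloring with probability at least $2/3$.

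Next I would invoke the lower bound of \cite{ACK19}, which shows that any randomized algorithm producing a valid $O(\Delta)$-coloring with probability at least $2/3$ must make $\Omega(n\sqrt n)$ queries in the worst case; since $2\Delta = O(\Delta)$, this applies to $A'$ and contradicts $3T = o(n\sqrt n)$. Therefore $T = \Omega(n\sqrt n)$, which is the claim.

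I do not expect a genuine obstacle here; the argument is essentially bookkeeping. The two points that merit a line of care are: (i) confirming that the cited bound is a \emph{bounded-error} lower bound, so that an algorithm of success probability $2/3$ is subject to it, and that it is stated for a color budget, such as $O(\Delta)$, that subsumes $2\Delta$ — both of which hold in \cite{ACK19}; and (ii) the quantifier order in the definition of expected query complexity, since the Markov step needs the per-input expectation (not merely some average over inputs) to be at most $T$, which indeed follows because $T$ is the worst case over inputs of the expected cost.
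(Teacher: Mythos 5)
Your proof is the same truncation-plus-Markov argument the paper uses: assume a Las Vegas algorithm with small expected cost, cut it off after a constant multiple of that expectation to get a bounded-error Monte Carlo algorithm, and derive a contradiction with the $\Omega(n\sqrt n)$ Monte Carlo lower bound of \cite{ACK19}. The one small mismatch is in the success threshold: the statement the paper actually invokes (Lemma~5.6 of \cite{ACK19}) only asserts that fewer than $n\sqrt n/400000$ queries forces failure probability at least $1/4$, so a $2/3$-success algorithm does not yet contradict it; the paper accordingly picks a truncation budget for which Markov yields success $\geq 9/10 > 3/4$. Your argument is repaired by truncating at, say, $5T$ rather than $3T$ (giving success $\geq 4/5$), or by observing that because $A$ is Las Vegas its termination certifies the output's validity, so the $2/3$ success probability can be amplified for free by a few independent truncated runs before the lower bound is applied.
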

\begin{proof}[Proof of corollary 1]
Lemma 5.6 of \cite{ACK19} implies that any Monte Carlo algorithm for $2\Delta$ coloring which makes fewer than $\frac{n\sqrt{n}}{400000}$ queries returns an invalid coloring with probability at least $\frac{1}{4}$. So, suppose that some Las Vegas algorithm which makes $O(\frac{n^{3/2}}{T(n)})$ queries in expectation for some $T: \mathbb{N} \rightarrow \mathbb{N}$ with $\lim_{n \to \infty}T(n) = \infty$. Now, consider a Monte Carlo variant of this algorithm which runs until a valid $2\Delta$-coloring is found or until $\frac{n\sqrt{n}}{500000}$ queries have been made. The probability that this Monte Carlo variant fails to return a valid coloring is equal to the probability that the Las Vegas algorithm makes $\frac{n\sqrt{n}}{500000}$ queries or more on the same input. Let $Q$ denote the number of queries made by the Las Vegas algorithm. From Markov's inequality, we have
\begin{align*}
  \mathbb{P}(Q \geq \frac{n\sqrt{n}}{500000}) \leq
  \frac{\mathbb{E}[Q]}{a}
  \leq\frac{n\sqrt{n}/T(n)}{n\sqrt{n}/500000}
  = \frac{500000}{T(n)}
\end{align*}
Since $T(n) \to \infty$ as $n \to \infty$, there must exist some $N$ such that for all $n > N$, $T(n) \geq 5000000$ and therefore for $n > N$
\begin{align*}
  \mathbb{P}(Q \geq \frac{n\sqrt{n}}{500000}) &\leq \frac{500000}{T(n)}\leq \frac{1}{10}
\end{align*}
Thus, this Monte-Carlo variant produces a valid $2\Delta$-coloring with probability at least $\frac{9}{10}$ and makes at most $\frac{n\sqrt{n}}{5000000}$ queries for sufficiently large $n$.
This contradicts lemma 5.6 of \cite{ACK19}, proving the stated lower bound.
\end{proof}
This result will guide us in our search for randomized algorithms. 
\subsection{Randomized $(1 + \epsilon)\Delta$-Coloring}
We begin this section by providing a Las Vegas algorithm for $(1 + \epsilon)\Delta$-coloring in the standard query model.
\begin{theorem}\label{theorem:theorem 2}
There exists a randomized algorithm which returns a valid $(1 + \epsilon)\Delta$ vertex coloring and makes $O(\epsilon^{-1/2}n\sqrt{n})$ queries in expectation.
\end{theorem}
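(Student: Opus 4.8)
The plan is to turn the ``draw a random color, test it, repeat'' heuristic into a Las Vegas algorithm and then trade it off against vanilla greedy. Fix a palette of $\lceil(1+\epsilon)\Delta\rceil\ge\Delta+1$ colors and process the vertices in an arbitrary fixed order $v_1,\dots,v_n$, maintaining for every color $c$ the list $L_c$ of already‑colored vertices currently assigned $c$ (these lists are updated without any queries). To color $v_i$: repeatedly sample a color $c$ uniformly at random from the palette and call \textbf{Find-Conflict}$(v_i,L_c)$, which issues at most $|L_c|$ pair queries $M[v_i,u]$ for $u\in L_c$ (stopping early at the first hit); if it reports no conflict, set $c(v_i):=c$, append $v_i$ to $L_c$, and proceed, otherwise resample. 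Correctness is immediate: when $v_i$ is reached its already‑colored neighbors occupy at most $\deg(v_i)\le\Delta$ colors, so at least $\lceil(1+\epsilon)\Delta\rceil-\Delta\ge1$ colors remain available; hence each iteration succeeds with probability at least $p:=\epsilon/(1+\epsilon)>0$ (up to the harmless rounding of the palette size), the inner loop terminates almost surely, and the output is a valid $(1+\epsilon)\Delta$‑coloring.

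For the query count, fix $i$ and condition on the history $\mathcal F_{i-1}$ up to the moment $v_i$ is processed; then the lists $\{L_c\}$ are determined and $\sum_c|L_c|=i-1\le n$. Let $T_i$ be the number of iterations spent on $v_i$ and $c_{i,1},c_{i,2},\dots$ the i.i.d.\ uniform colors drawn. Then $T_i$ is stochastically dominated by a geometric variable with parameter $p$, so $\mathbb E[T_i\mid\mathcal F_{i-1}]\le 1/p\le(1+\epsilon)/\epsilon$, and the cost incurred at $v_i$ is at most $\sum_{t=1}^{T_i}|L_{c_{i,t}}|$. The key point is that the current coloring does not change while $v_i$ is being processed, so conditioned on $\mathcal F_{i-1}$ the event $\{T_i\ge t\}$ is a function of $c_{i,1},\dots,c_{i,t-1}$ only and is therefore independent of $|L_{c_{i,t}}|$; by a Wald‑type identity,
\[
\mathbb E\!\Big[\sum_{t=1}^{T_i}|L_{c_{i,t}}|\;\Big|\;\mathcal F_{i-1}\Big]
=\mathbb E[T_i\mid\mathcal F_{i-1}]\cdot\mathbb E\big[|L_{c_{i,1}}|\bigm|\mathcal F_{i-1}\big]
\le\frac{1+\epsilon}{\epsilon}\cdot\frac{i-1}{(1+\epsilon)\Delta}
\le\frac{n}{\epsilon\Delta},
\]
where the middle equality uses that a uniformly random color $c_{i,1}$ has $\mathbb E[|L_{c_{i,1}}|\mid\mathcal F_{i-1}]=(i-1)/\lceil(1+\epsilon)\Delta\rceil$. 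Taking total expectations and summing over $i\in[n]$ yields expected query complexity $O(n^2/(\epsilon\Delta))$.

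It remains to remove the dependence on $\Delta$. Since $\Delta$ is known up front, run the above algorithm when $\Delta>\sqrt{n/\epsilon}$ and the classical greedy $\Delta+1$‑coloring algorithm otherwise (greedy costs $\sum_v\deg(v)\le n\Delta$ neighbor queries, and its $\Delta+1$ colors fit inside the $\lceil(1+\epsilon)\Delta\rceil$‑color palette for every $\epsilon>0$). The resulting expected cost is $O(\min\{\,n\Delta,\ n^2/(\epsilon\Delta)\,\})$, and the maximum of this quantity over $\Delta$ is attained at the crossover $\Delta=\sqrt{n/\epsilon}$, giving $O(\epsilon^{-1/2}n\sqrt n)$.

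The one place I expect to need care is the displayed inequality: one might worry that the ``wasted'' iterations at $v_i$ are biased toward colors $c$ with large $L_c$ (after all, a conflicting color is one some neighbor already uses), which would wreck a naive $\mathbb E[T_i]\cdot\max_c|L_c|$ estimate. Isolating the stopping time from the per‑iteration cost via the independence of $\{T_i\ge t\}$ from the $t$‑th draw is exactly what licenses the clean product bound; after that, everything is bookkeeping, up to the rounding conventions already noted for the palette size.
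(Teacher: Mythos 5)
Your proposal is correct and follows essentially the same route as the paper: the same draw-a-random-color-and-test Las Vegas algorithm, the same per-vertex expected cost bound of $n/(\epsilon\Delta)$, and the same trade-off against greedy at the crossover $\Delta = \sqrt{n/\epsilon}$. The only presentational difference is that you factor the per-vertex cost via a Wald-type identity (isolating the stopping time from the per-iteration cost using the independence of $\{T_i\ge t\}$ from the $t$-th draw), whereas the paper reaches the identical bound $\mathbb{E}[Q_t]\le\frac{1}{p}\mathbb{E}[|\chi(c)|]$ by conditioning on the first color draw and solving the one-step renewal recurrence $\mathbb{E}[Q_t]=\mathbb{E}[|\chi(c)|]+(1-p)\mathbb{E}[Q_t]$.
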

\begin{proof}[Proof of theorem 2]
Recall that the greedy algorithm for $\Delta + 1$-coloring algorithm
makes $O(n\Delta)$ queries, so when $\Delta \ll n$ this algorithm
performs very well. This helps us on our search for sublinear
algorithms since we can use this just greedy algorithm when $\Delta$
is sufficiently small, allowing us to restrict our attention to larger
$\Delta$ i.e. $\Delta \geq \sqrt{n}$ - this is the case in previous work \cite{alon2020palette, ACK19}. Since $\Delta$ is not given, we must solve for it. This can be done with neighbor queries by binary searching at each vertex, i.e. we binary search at $v$ to find the maximum $i$ such that the neighbor query $(v, i)$ does not return $Null$. Doing this at every vertex and then taking the maximum uses $O(n\log{n})$ queries.

Both the classical and quantum coloring algorithms in this paper will have the same general strategy: when coloring a vertex choose a random color, test whether that color is a valid choice, if it is we move on, otherwise repeat. While this is a simple approach with some obvious limitations, it yields some surprisingly efficient algorithms. The only data structure we maintain in all of our algorithms is defined as follows:
\begin{itemize}
    \item For every $c\in [(1 + \epsilon)\Delta]$ we maintain a set called $\chi(c)$ of all vertices which we have been colored with color $c$. Let $\chi_t(c)$ denote the collection of vertices with color $c$ after having colored $t - 1$ vertices.
\end{itemize}
Our randomized algorithm is as follows:

\begin{mdframed}[style=figstyle,innerleftmargin=10pt,innerrightmargin=10pt]
\begin{algorithm}[H]
\SetAlgoLined
\textbf{for} $v \in V$\\
\ \ \ \ \textbf{while} $v$ is not colored\\
\ \ \ \ \ \ \ \ Choose a color $c$ uniformly at random from $[(1 + \epsilon)\Delta]$ \\
\ \ \ \ \ \ \ \ \textbf{for} $u \in \chi(c)$\\
\ \ \ \ \ \ \ \ \ \ \ \ Query $M[u, v]$\\
\ \ \ \ \ \ \ \ \ \ \ \ \textbf{if} $M[u, v] = 1$\\
\ \ \ \ \ \ \ \ \ \ \ \ \ \ \ \ \textbf{break} choose a new color\\
 \ \ \ \ \ \ \ \ Assign color $c$ to $v$ and update $\chi(c)$ with $v$\\
\textbf{return} The coloring assignment $\{\chi(1), \dots \chi((1 + \epsilon)\Delta)\}$
\caption{$(1 + \epsilon)\Delta$-Color($G(V, E), \epsilon$)}
\end{algorithm}
\end{mdframed}

Note that this algorithm always returns a valid $(1 + \epsilon)\Delta$-coloring since a color $c \in [(1 + \epsilon)\Delta]$ is only ever assigned to $v$ if no edge of the form $(u, v)$ for $u \in \chi(c)$ is present in the graph. Every vertex has at least one valid color since a vertex $v$ has at most $\Delta$ neighbors and therefore at least $(1 + \epsilon)\Delta - \Delta = \epsilon\Delta \geq 1$ valid colors.

Now, we will show that $(1 + \epsilon)\Delta$-Color makes $O(\epsilon^{-1/2}n\sqrt{n})$ queries in expectation, for sufficiently large $\Delta$. Let $Q(G)$ denote the number of queries made by this algorithm for a graph $G$ and let $Q_{t}$ denote the number of queries made when coloring vertex $v_t$. Observe that
\begin{align*}
    \mathbb{E}[Q(G)] &= \mathbb{E}\bigg{(}\sum_{t =
                       1}^{n}Q_{t}\bigg{)} = \sum_{t = 1}^n \mathbb{E}[Q_{t}]
\end{align*}
So, bounding $\mathbb{E}[Q_{t}]$ will allow us to bound the total expected number of queries.
\begin{lemma}\label{lemma:lemma 3}
For $\Delta \geq \sqrt{n/\epsilon}$ and any $t \leq n$$$ \mathbb{E}[Q_{t}] = O(\sqrt{n/\epsilon})$$
\end{lemma}
\begin{proof}[Proof of lemma 3] Let $c$ be the first color that is chosen for the vertex we are currently looking at. There are two possibilities: 1) $c$ is valid at $v_t$ or 2) it is not. Let $p = \mathbb{P}[c \text{ is valid at }v_t]$, we have:
 \begin{align*}
    \mathbb{E}[Q_t] &= p\mathbb{E}[|\chi(c)| \ : \ c \text{ is valid at }v_t] + (1 - p)(\mathbb{E}[|\chi(c)| \ : \ c \text{ is invalid at }v_t] + \mathbb{E}[Q_t])\\
    &= p\mathbb{E}[|\chi(c)| \ : \ c \text{ is valid at }v_t] + (1 - p)\mathbb{E}[|\chi(c)| \ : \ c \text{ is invalid at }v_t] + (1 - p) \mathbb{E}[Q_t]\\
    &= \mathbb{E}[|\chi(c)|] + (1 - p) \mathbb{E}[Q_t]
\end{align*}
When $c$ is chosen uniformly at random from $[(1 + \epsilon)\Delta]$, $\mathbb{E}[|\chi(c)|] \leq \frac{n}{(1 + \epsilon)\Delta}$. Also, recall that there are at least $\epsilon\Delta$ colors valid colors for $v_t$, so $p \geq \frac{\epsilon\Delta}{(1 + \epsilon)\Delta} = \frac{\epsilon}{1 + \epsilon}$. From these facts it follows that
\begin{align*}
  \mathbb{E}[Q_t] &= \frac{1}{p}\mathbb{E}[|\chi(c)|]
                    \leq \frac{1 + \epsilon}{\epsilon}\frac{n}{(1 + \epsilon)\Delta}
                    = \frac{n}{\epsilon\Delta}
\end{align*}
Hence, when $\Delta \geq \sqrt{n/\epsilon}$
\[
\mathbb{E}[Q_t] \leq \sqrt{n/\epsilon} \, . \]

\end{proof}
From Lemma~\ref{lemma:lemma 3} we have that
\begin{align*}
    \sum_{t = 1}^n \mathbb{E}[Q_t] = O(\epsilon^{-1/2}n\sqrt{n})
\end{align*}
when $\Delta \geq \sqrt{n/\epsilon}$.
So, $(1 + \epsilon)$-Color makes $O(\epsilon^{-1/2}n\sqrt{n})$ queries in expectation for $\Delta \geq \sqrt{n/\epsilon}$ and the greedy algorithm suffices for $\Delta \leq \sqrt{n/\epsilon}$. This concludes the proof of Theorem~\ref{theorem:theorem 2}.
\end{proof}
Whenever $\epsilon^{-1/2} = O(1)$ we have that $\mathbb{E}[Q(G)] =
O(n\sqrt{n})$. Hence this algorithm achieves the query lower bound for
Las Vegas algorithms established in corollary 1 for sufficiently large
values of $\epsilon$ - this is somewhat surprising given the exceedingly simple nature of this randomized algorithm. However, this procedure is not useful when $\epsilon^{-1}
= \Delta$ ($\Delta + 1$-coloring), so finding an $O(n\sqrt{n})$ algorithm (Las Vegas or Monte Carlo) for $\Delta +
1$ coloring remains as an open problem. Also, note that this algorithm can readily be converted to have a fixed query complexity with error bounded by $O(1/n^k)$ for any constant $k$ via Markov's inequality. Doing so incurs a multiplicative cost of an additional $O(\log{n}$  It would be interesting to see if the randomness used here could be combined with the palette sparsification technique of \cite{ACK19} to achieve an improved algorithm for $\Delta + 1$-coloring.
\\
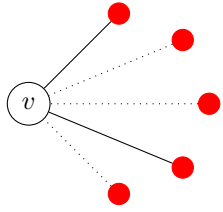
\begin{figure}
\centering 
\begin{tikzpicture}[transform shape]
      \node[draw, circle]
    (N0) at ({180}:1.2cm) {$v$};
    
    \node[draw, circle, inner sep=.1cm, red, fill]
    (N1) at ({90}:1.2cm) {};
    \node[draw, circle, inner sep=.1cm, red, fill]
    (N2) at ({45}:1.2cm) {};
    \node[draw, circle, inner sep=.1cm, red, fill]
    (N3) at ({0}:1.2cm) {};
    \node[draw, circle, inner sep=.1cm, red, fill]
    (N4) at ({315}:1.2cm) {};
    \node[draw, circle, inner sep=.1cm, red, fill]
    (N5) at ({270}:1.2cm) {};
    
   \path (N1) edge[thin] (N0);
   \path (N2) edge[dotted] (N0);
   \path (N3) edge[dotted] (N0);
   \path (N4) edge[thin] (N0);
   \path (N5) edge[dotted] (N0);

\end{tikzpicture}
\caption{Here, we query all edges whose existence would prevent us from coloring $v$ red}
\end{figure}
\section{Quantum Algorithms}
The simple randomized algorithm explored in the previous section involved repeatedly solving the following sub-problem:
Given a vertex $v \in V$ and a collection of vertices $S \subset V\setminus v$ determine if $\{(u, v) : u \in S\} \cap E(G) = \emptyset$. In other words, does $v$ have any neighbors in $S$? 
In effect, this is just an instance of unstructured search. However, we do not know how many (if any) such edges are actually present, so a standard Grover search algorithm will not work here. Luckily, a slight modification of Grover search resolves this problem.
\begin{claim}
There is a quantum algorithm which given a set of vertices $S \subset V$ and $v \in V \setminus S$  determines whether or not $v$ is adjacent to any vertex in $S$ and returns such a vertex in $S$ if it exists using $O(\sqrt{|S|})$ queries to the oracle with success probability at least $\frac{2}{3}$, without prior knowledge of $|\{ (v, s) \in E \ | \ s \in S\}|$.
\end{claim}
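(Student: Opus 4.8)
The plan is to reduce the problem to Grover search with an \emph{unknown} number of marked items and invoke the exponential-search variant of Grover's algorithm due to Boyer, Brassard, H\o{}yer, and Tapp \cite{Boyer_98} (henceforth BBHT). First I would fix an indexing $S = \{s_1, \dots, s_N\}$ with $N = |S|$ and observe that the search oracle we need --- the map that marks index $i$ exactly when $(v, s_i) \in E(G)$ --- is implemented by a single call to the pair-query oracle $O_M$ composed with the fixed reindexing $i \mapsto s_i$, so one ``Grover query'' costs one graph query. The number of marked indices is $t := |\{(v,s) \in E : s \in S\}|$, which is exactly the quantity we are told is unknown; this is precisely the setting BBHT is designed for.

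Next I would run the BBHT schedule: maintain a parameter $m$, initialized to $1$ and increased geometrically (say $m \leftarrow \min\{\lambda m,\, \sqrt{N}\}$ for a constant $\lambda \in (1, 4/3)$) after each unsuccessful round; in each round, pick an iteration count $j$ uniformly from $\{0, 1, \dots, \lceil m \rceil\}$, prepare the uniform superposition over $[N]$, apply $j$ Grover iterations, measure to obtain an index $i$, and then \emph{classically} query $M[v, s_i]$ to check whether $s_i$ is genuinely a neighbor of $v$. If it is, output $s_i$; otherwise continue. If the total query count ever exceeds $C\sqrt{N}$ for a suitable absolute constant $C$, halt and report ``$v$ has no neighbor in $S$''. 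The classical verification step is what makes the positive side error-free: the algorithm never outputs a vertex that is not actually adjacent to $v$, so in particular when $t = 0$ it always (correctly) reports ``no neighbor''.

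For correctness when $t \geq 1$, I would invoke the BBHT analysis, which shows that with the above schedule the expected number of Grover iterations (hence queries) before a marked index is measured is $O(\sqrt{N/t}) = O(\sqrt{N})$; the geometric growth of $m$ ensures the running total is dominated by the final term, so no extra logarithmic factor appears. Applying Markov's inequality to this expectation, with the cap set at $C\sqrt{N}$ for $C$ a large enough multiple of the hidden constant, the probability that the algorithm is forced to halt before finding a marked index is at most $1/3$. Hence it returns a correct neighbor with probability at least $2/3$ and always uses $O(\sqrt{N})$ queries, as claimed.

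The main obstacle --- or rather the only delicate point --- is the BBHT guarantee itself: na\"ively running $j \approx \tfrac{\pi}{4}\sqrt{N/t}$ Grover iterations requires knowing $t$, and an ill-chosen $j$ can rotate the state past the marked subspace and kill the amplitude. This is exactly what the randomized choice of $j$ from a growing interval fixes, and the proof that it does is the content of \cite{Boyer_98}; I would cite it rather than reprove it. The remaining steps --- folding the classical check into each round, arguing there are no false positives, and the Markov truncation --- are routine.
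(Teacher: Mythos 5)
Your proof is correct and is essentially the canonical BBHT argument, which is also the result the paper leans on: the paper's body treats the claim as a citation to Theorem~3 of \cite{tight_bounds} (``originally proven in \cite{Boyer_98}''), and so the real content in both cases is the Boyer--Brassard--H\o{}yer--Tapp analysis of Grover search with an unknown number of marked items. Where you diverge is in the concrete schedule: you run the full exponential-search loop ($m \leftarrow \min\{\lambda m, \sqrt{N}\}$, $j$ uniform in $\{0,\dots,\lceil m\rceil\}$) and then cap at $C\sqrt{N}$ queries with a Markov argument to turn the $O(\sqrt{N/t})$ expectation into a fixed-budget $2/3$-success algorithm, whereas the paper's appendix sketches a shorter two-round scheme that runs exactly $\frac{\pi}{4}\sqrt{N/t}$ Grover iterations for $t=1$ and then $t=2$, each followed by the same classical verification query. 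Both share the crucial observation you highlight --- that classically verifying the measured index eliminates false positives, so the $t=0$ case is handled for free --- and both attribute the quantitative heart of the argument to \cite{Boyer_98}. Your route is the more standard and arguably more robust one (the two-round sketch requires a separate argument that the fixed iteration counts succeed with constant probability across all $t$, which the paper offloads to \cite{tight_bounds}); yours trades that for slightly more bookkeeping in the schedule and the Markov truncation, but the pieces are all in place and the conclusion matches the claim.
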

This is a restatement of Theorem 3 of \cite{tight_bounds}, whose algorithm is sketched in the appendix. This result was originally proven in \cite{Boyer_98}. This $2/3$ success probability can be amplified to at least $1 - \frac{1}{n^3}$ via $O(\log{n})$ additional runs. So, we will let $\textbf{Find-Conflict}(v, S)$ be the algorithm which determines if $v$ is adjacent to $s \in S$ with failure probability at most $\frac{1}{n^2}$ and uses $O(\sqrt{|S|}\log{n})$ queries. We can additionally guarantee that \textbf{Find-Conflict} does not return false positives (concluding there is a conflict when no such conflict exists) by having it return a proposed neighbor of $v$, $s \in S$, and then make the classical query $(v, s)$ to ensure this edge is actually present in the graph. Thus, the only chance of error is when we incorrectly conclude there is no conflict, which happens with probability at most $\frac{1}{n^3}$. A detailed version of this algorithm is included in Appendix \ref{appendix:Appendix A}.

In $(1 + \epsilon)\Delta$-Color we needed to spend $|\chi(c)|$ queries in the worst case in order to determine if $c$ is valid at $v$, but by using $\textbf{Find-Conflict}$ we are able to achieve a quantum speed up. Another change in the case of the quantum algorithm is that we wish only have one type of randomness (accuracy or query complexity). To simplify the analysis we only allow the algorithm to test "small" color classes; explicitly, we only look at color classes of size at most $\frac{2n}{\epsilon\Delta}$. We did not do this in the case of the randomized algorithm since the expected size of a uniformly random color class gives the desired asymptotic. Since we now wish to bound the error of the algorithm, we fix the total number of queries at $T = 9\epsilon^{-3/2}\log^2{n}\sqrt{\frac{n^3}{\Delta}}$.

\begin{theorem}\label{theorem:theorem 5}
There exists a quantum algorithm which returns a valid $(1 + \epsilon)\Delta-$vertex coloring with high probability and makes $\tilde{O}(\epsilon^{-1}n^{4/3})$ queries.
\end{theorem}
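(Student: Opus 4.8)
The plan is to follow exactly the two-regime structure of Theorem~\ref{theorem:theorem 2}. First I would compute $\Delta$ by binary searching at every vertex with neighbor queries, at a cost of $O(n\log n)$ queries, and then branch on its value: if $\Delta$ is below a threshold I would run the greedy $\Delta+1$-coloring algorithm, which costs $O(n\Delta)$ neighbor queries, and otherwise run a quantum analogue of $(1+\epsilon)\Delta$-Color. The quantum algorithm colors vertices one at a time: to color $v$ it repeatedly draws a uniformly random color $c\in[(1+\epsilon)\Delta]$; if the maintained class $\chi(c)$ currently has size at most $2n/(\epsilon\Delta)$ it invokes $\textbf{Find-Conflict}(v,\chi(c))$, assigning $c$ to $v$ if no conflict is reported and resampling otherwise, while colors whose class already exceeds $2n/(\epsilon\Delta)$ are simply resampled at $O(1)$ bookkeeping cost. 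The whole run is given a global budget of $T = 9\epsilon^{-3/2}\log^2 n\,\sqrt{n^3/\Delta}$ queries.

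For correctness, I would invoke the guarantees of $\textbf{Find-Conflict}$: it never reports a false positive (thanks to the final classical verification query), and it erroneously reports ``no conflict'' with probability at most $1/n^3$. Since the total number of invocations is at most $T=\mathrm{poly}(n)$, a union bound shows that with high probability every assigned color is genuinely non-conflicting, so the output is a proper coloring; and the usual counting $(1+\epsilon)\Delta-\Delta=\epsilon\Delta\ge 1$ guarantees a valid color always exists, so as long as the budget is not exhausted all vertices get colored.

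The core of the argument is the query/probability bound, and here the key step is a counting observation: at any moment at most $n$ vertices have been colored, so $\sum_c|\chi(c)|\le n$ and hence at most $\epsilon\Delta/2$ color classes can exceed size $2n/(\epsilon\Delta)$; since at least $\epsilon\Delta$ colors are valid for $v$, at least $\epsilon\Delta/2$ colors are simultaneously valid and ``small'', so a uniformly random color is ``good'' with probability $\frac{\epsilon\Delta/2}{(1+\epsilon)\Delta}=\Omega(\epsilon)$. Thus the number of draws needed to color a fixed vertex is dominated by a geometric variable with success probability $\Omega(\epsilon)$, so after $O(\epsilon^{-1}\log n)$ draws the vertex is colored except with probability $n^{-10}$, and each draw costs at most $O(\sqrt{2n/(\epsilon\Delta)}\,\log n)$ queries through $\textbf{Find-Conflict}$. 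A union bound over the $n$ vertices then gives, with high probability, a total cost of $O\big(n\cdot\epsilon^{-1}\log n\cdot\sqrt{n/(\epsilon\Delta)}\,\log n\big)=O(\epsilon^{-3/2}\sqrt{n^3/\Delta}\,\log^2 n)\le T$, so the budget is respected.

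Finally I would balance the two branches: the quantum branch costs $\tilde{O}(\epsilon^{-3/2}\sqrt{n^3/\Delta})$ and the greedy branch $O(n\Delta)$, so picking the cheaper of the two gives query complexity $\tilde{O}\big(\min\{n\Delta,\ \epsilon^{-3/2}\sqrt{n^3/\Delta}\}\big)$; this is largest when the two quantities coincide, i.e.\ at $\Delta=\tilde{\Theta}(\epsilon^{-1}n^{1/3})$, where it equals $\tilde{O}(\epsilon^{-1}n^{4/3})$, which is therefore an upper bound for every $\Delta$. I expect the main obstacle to be the probabilistic bookkeeping rather than any single inequality: one must upgrade the per-vertex \emph{expected}-cost estimate of Lemma~\ref{lemma:lemma 3} to a high-probability bound that holds simultaneously over all $n$ vertices and all $\textbf{Find-Conflict}$ calls, correctly unioning the independent Grover-subroutine failures and the geometric tail events against the fixed budget $T$, and in particular checking that a rare spurious conflict (triggering a resample) cannot inflate the iteration count.
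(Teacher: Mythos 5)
Your proposal reproduces the paper's own proof essentially verbatim: the same \textbf{Quantum-Color} algorithm with the $2n/(\epsilon\Delta)$ threshold on class sizes and the budget $T = 9\epsilon^{-3/2}\log^2 n\sqrt{n^3/\Delta}$, the same counting observation that at most $\epsilon\Delta/2$ classes can be ``large'' (so at least $\epsilon\Delta/2$ colors are both valid and small, giving success probability $\Omega(\epsilon)$ per draw), the same geometric-tail and union-bound bookkeeping, and the same final balancing against greedy at $\Delta = \Theta(\epsilon^{-1}n^{1/3})$. This is the paper's argument, correctly reconstructed.
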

\begin{proof}[Proof of theorem 5]
Consider the algorithm below:
\begin{mdframed}[style=figstyle,innerleftmargin=10pt,innerrightmargin=10pt]
\begin{algorithm}[H]
\SetAlgoLined
$t \gets 0$\\
\textbf{while} $t \leq T$\\
\ \ \ \ \textbf{for} $v \in V$:\\
\ \ \ \ \ \ \ \ \textbf{while} $v$ is not colored \\
\ \ \ \ \ \ \ \ \ \ \ \ Choose a color $k$ uniformly at random from $[(1 + \epsilon)\Delta]$ \\
\ \ \ \ \ \ \ \ \ \ \ \ \textbf{if} $|\chi(k)| >  \frac{2n}{\epsilon\Delta}$\\
\ \ \ \ \ \ \ \ \ \ \ \ \ \ \ \ \textbf{break} choose a new color\\
\ \ \ \ \ \ \ \ \ \ \ \ \textbf{else}\\
\ \ \ \ \ \ \ \ \ \ \ \ \ \ \ \ \textbf{if} \textbf{Find-Conflict}($v, \chi(k)$)\\
\ \ \ \ \ \ \ \ \ \ \ \ \ \ \ \ \ \ \ \ $t \gets t + 2\sqrt{\frac{n}{\epsilon\Delta}}\log{n}$\\
\ \ \ \ \ \ \ \ \ \ \ \ \ \ \ \ \ \ \ \ \textbf{break} choose a new color\\
\ \ \ \ \ \ \ \ \ \ \ \ \ \ \ \ \textbf{else}\\
\ \ \ \ \ \ \ \ \ \ \ \ \ \ \ \ \ \ \ \ Assign $k$ to $v$ and update $\chi(k)$\\
\textbf{return} The coloring assignment $\{\chi(1), \dots \chi((1 + \epsilon)\Delta)\}$
\caption{Quantum-Color($G(V, E)$, $\epsilon$)}
\end{algorithm}
\end{mdframed}

\begin{lemma}\label{lemma:lemma 6}
Quantum-Color returns valid $(1 + \epsilon)\Delta$-coloring of $G$ with high probability.
\end{lemma}

\begin{proof}[Proof of lemma 6]
First, note that that in this algorithm \textbf{Find-Conflict} is only ever called when $|\chi(c)| \leq \frac{2n}{\epsilon\Delta}$, so every call makes at most $2\sqrt{\frac{n}{\epsilon\Delta}}\log{n}$ queries and has failure probability at most $1/n^2$. This algorithm fails to produce a valid coloring in two cases:
\\
\begin{enumerate}
    \item Some vertex is assigned an invalid color
    \item Not all of the vertices are colored by the time $9\epsilon^{-3/2}\log^2 n \sqrt{\frac{n^3}{\Delta}}$ queries have been made
\end{enumerate}
\vspace{3mm}
For the first case, note that any $v \in V$ is assigned an invalid color only when $\textbf{Find-Conflict}$ returns \textbf{false} when it should return \textbf{true}. This occurs with probability at most $\frac{1}{n^2}$ when a color which is invalid at $v$ is chosen. So a union bound shows that when all vertices are assigned a color, the coloring is valid with probability at least $1 - \frac{1}{n^2}$. 

Now, we will show that with high probability every vertex is assigned some color before using too many queries. Let $Q_t$ be the number of queries made when attempting to color vertex $v_t$. Additionally, let $K_t$ be the number of times that a small random color ($|\chi(k)| \leq \frac{2n}{\epsilon\Delta}$) is tested when attempting to color $v_t$. Note that $K_t$ is the number of times that \textbf{Find-Conflict} is called on $v_t$. Let $\ell = 9 \epsilon^{-3/2}\log^2 n\sqrt{\frac{n}{\Delta}}$. Then,
\begin{align*}
\mathbb{P}[Q_t \geq \ell] &\leq  \mathbb{P}[K_t \geq 9\epsilon^{-1}\log n]
\end{align*}
Since \textbf{Find-Conflict} cannot have false positives, $K_t$ only depends on the number of invalid colors that we test with \textbf{Find-Conflict}. In other words, we are guaranteed to stop once a valid small color is chosen and we will use this to upper bound $K_t$.
\\
\\
Now, note that there are at most $\frac{\epsilon\Delta}{2}$ colors $k$ such that $|\chi(k)| \geq \frac{2n}{\epsilon\Delta}$ and since there are at least $\epsilon\Delta$ valid colors for $v$, there must be at least $\frac{\epsilon\Delta}{2}$ colors which are both small and valid. For a color $k$ which is randomly chosen among all small colors, it is valid with probability at least $\frac{\epsilon\Delta/2}{(1 + \epsilon\Delta)} = \frac{\epsilon}{2(1 + \epsilon)} \geq \frac{\epsilon}{4}$. Thus,  
\begin{align*}
  \mathbb{P}[Q_t \geq \ell] &\leq  \mathbb{P}[K_t \geq 9\epsilon^{-1}\log n]
                              \leq \bigg{(}1 - \frac{\epsilon}{4} \bigg{)}^{9 \epsilon^{-1}\log{n}}
\end{align*}
Now, note that for $\epsilon > 0$, $\bigg{(}1 - \frac{\epsilon}{4}\bigg{)}^{\epsilon^{-1}} < \frac{1}{\sqrt[4]{e}}$, so 
\begin{align*}
  \mathbb{P}[Q_t \geq \ell] &\leq \bigg{(}\frac{1}{\sqrt[4]{e}} \bigg{)}^{9\log{n}}
                              \leq \bigg{(}\frac{1}{2}\bigg{)}^{3\log{n}}
                              = \frac{1}{n^3}
\end{align*}
By union bound, we get 
\begin{align*}
    \mathbb{P}\bigg{[} \sum_{t = 1}^n Q_t \geq 9\epsilon^{-3/2}\log^2{n}\sqrt{\frac{n^3}{\Delta}}\bigg{]}
    &\leq \sum_{t = 1}^n \mathbb{P}[K_t \geq 9\epsilon^{-1}\log{n}]\\
    &\leq \frac{n}{n^3} = \frac{1}{n^2}
\end{align*}
Taking both failure cases into account, we can see $O(\epsilon^{-3/2}\log^2{n} \sqrt{\frac{n^3}{\Delta}})$ queries are made and that a valid $(1 + \epsilon)\Delta$-coloring is obtained with probability at least $1 - \Theta(n^{-2})$ for sufficiently large $n$ and this can be boosted to $1 - \Theta(n^{-k})$ by altering the constants used for any $k > 0$.
\end{proof}
Now, note that whenever $\Delta \geq n^{1/3}\epsilon^{-1}$ this algorithm makes $\tilde{O}(\epsilon^{-1}n^{4/3})$ queries and the greedy algorithm suffices whenever $\Delta \leq n^{1/3}$.
\end{proof}

\section{Lower Bounds}
In this section, we establish some lower bounds for coloring and maximal matching with a very simple argument relying on the optimality of Grover's algorithm \cite{tight_bounds}:
\begin{theorem}
$\Omega(n$) queries are necessary to obtain an $O(\Delta)$- coloring on graphs with exactly one edge using only (quantum) adjacency matrix queries.
\end{theorem}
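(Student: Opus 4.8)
The plan is to prove this directly with the hybrid (BBBV) argument --- the very technique behind the $\Omega(\sqrt{N})$ quantum lower bound for unstructured search over $N$ items --- applied to the family of one-edge graphs, which has $N=\binom{n}{2}$ members. Work in the pure adjacency-matrix model: the algorithm knows only $n$ and makes (quantum) pair queries, so the degree sequence is \emph{not} handed over (otherwise the two endpoints of the single edge would be known and the problem would be trivial). Fix the constant $C$ of colors, which is $O(\Delta)=O(1)$ since $\Delta=1$, and suppose $A$ is a $q$-query quantum algorithm that for every pair $e=\{u,v\}$ outputs a valid $C$-coloring of the one-edge graph $G_e$ with high probability. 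The crucial reformulation is that, since a $C$-coloring is valid for $G_e$ exactly when $e$ is \emph{bichromatic}, $A$'s output must make the hidden pair bichromatic with high probability, \emph{uniformly over all $\binom n2$ choices of $e$} --- something a non-committal coloring cannot do.

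First I would run $A$ on the (virtual) empty-graph oracle $G_\emptyset$ --- a legitimate thought experiment even though $G_\emptyset$ lies outside the promised family --- and let $\mathcal D$ be the resulting distribution on colorings. If $a_{e,t}$ denotes the squared amplitude on ``query the pair $e$'' at step $t$ of this run, then since the query register holds one pair we have $\sum_e a_{e,t}\le 1$, hence $\sum_e q_e\le q$ with $q_e:=\sum_{t<q}a_{e,t}$, and $\mathbb{E}_e[q_e]\le q/\binom n2$ under uniform $e$. As $G_\emptyset$ and $G_e$ differ only in the bit $e$, the standard hybrid bound gives $\||\psi_q^{G_e}\rangle-|\psi_q^{G_\emptyset}\rangle\|\le 2\sum_{t<q}\sqrt{a_{e,t}}\le 2\sqrt{q\,q_e}$ by Cauchy--Schwarz, so the two output distributions satisfy $\|\mathrm{out}_A(G_e)-\mathcal D\|_{\mathrm{TV}}\le 2\sqrt{q\,q_e}$.

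Next I would fold in correctness. For each $e$, validity of $A$ on $G_e$ with high probability gives $\Pr_{c\sim\mathcal D}[\,e\text{ bichromatic under }c\,]\ge 1-o(1)-2\sqrt{q\,q_e}$; averaging over uniform $e$, interchanging expectations, and using concavity of $\sqrt{\cdot}$ together with $\mathbb{E}_e[q_e]\le q/\binom n2$ yields $\mathbb{E}_{c\sim\mathcal D}[\beta(c)]\ge 1-o(1)-2q/\sqrt{\binom n2}$, where $\beta(c)$ is the fraction of vertex pairs that are bichromatic under $c$. Finally, a convexity/counting estimate shows that any coloring of $n$ vertices into $\le C$ classes of sizes $n_i$ has $\sum_i\binom{n_i}{2}\ge C\binom{n/C}{2}$, hence $\beta(c)\le 1-\tfrac1C+O(1/n)$ for \emph{every} $c$. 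Combining the two bounds gives $2q/\sqrt{\binom n2}\ge \tfrac1C-o(1)$, i.e. $q=\Omega(n/C)=\Omega(n)$, since $\binom n2=\Theta(n^2)$ and $C=O(1)$.

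The main obstacle --- and the subtlety worth flagging explicitly --- is that producing a valid $O(1)$-coloring of a one-edge graph is trivial with \emph{constant} success probability (a uniformly random $3$-coloring already succeeds with probability $2/3$ and zero queries), so the lower bound genuinely lives in the high-success regime: the argument needs the success probability to beat $1-\tfrac1C$, which ``with high probability'' does comfortably, and the fact that $\beta(c)$ is bounded away from $1$ for $O(1)$-colorings is precisely what makes the counting step bite. I expect the hybrid/averaging step (paragraphs two and three) to be the delicate part where care with the $\ell_1$/trace-distance bookkeeping is needed, while the convexity bound on $\beta$ is routine; the analogous statement for maximal matching on one-edge graphs should follow from the same template.
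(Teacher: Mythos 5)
Your proof is correct, and it is genuinely more careful than the paper's. The paper's argument asserts that producing a valid $2$-coloring of a one-edge graph ``is the same as'' determining the hidden pair $\{i,j\}$, and then cites the $\Omega(\sqrt{N})$ Grover lower bound with $N=\binom n2$. But that claimed equivalence only runs one way: knowing the edge trivially yields a coloring, whereas a valid coloring does not identify the edge --- a balanced $2$-coloring makes about $n^2/4$ pairs bichromatic, so there is no black-box reduction from search to coloring to transfer the lower bound through. Your proof avoids that gap by dispensing with the reduction and running the hybrid (BBBV) argument directly on the coloring task. The two ingredients you add are exactly what is needed to make the search-style bound bite despite a coloring being non-committal: (i) the TV bound $\|\mathrm{out}_A(G_e)-\mathcal D\|_{\mathrm{TV}}\le 2\sqrt{q\,q_e}$ followed by an average over $e$ and Jensen, and (ii) the convexity estimate that any $C$-class partition leaves at least a $\tfrac1C-O(1/n)$ fraction of pairs monochromatic, so $\beta(c)\le 1-\tfrac1C+O(1/n)$ for every fixed coloring. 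You also correctly flag the subtlety the paper never addresses: for $O(1)$ colors a zero-query random coloring already succeeds with constant probability, so the bound is only nontrivial once the required success probability exceeds $1-\tfrac1C$, which the ``with high probability'' regime comfortably ensures. Your argument further handles any constant $C$ directly, whereas the paper only treats $C=2$ and then waves at the rest. In short, where the paper gestures at a reduction that does not exist as stated, your proof supplies a complete and correct direct argument by the same underlying hybrid technique.
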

\begin{proof}[Proof of theorem 12]
 Finding a $2$-coloring for these graphs, given the promise that there is exactly one edge is the same as determining for which $i, j \in [n]$ with $i \not = j$ we have $M[i, j] = 1$. Clearly, we can see this as an instance of unstructured search with exactly one marked element among the $\binom{n}{2} = O(n^2)$ possible edges. Since unstructured search requires $\Omega(\sqrt{N})$ quantum queries, $\Delta + 1$ coloring requires $\Omega(n)$ quantum queries in general. This also implies that $\Omega(n)$ queries are necessary for $2\Delta$ coloring on general graphs since finding a $2-$coloring for graphs with exactly one edge is also equivalent to finding a $2\Delta$ coloring ($\Delta = 1$).\end{proof}
 It would be very interesting to see if more advanced techniques could yield a tighter lower bound, especially one which includes classical neighbor queries. We suspect that when all types of queries are available the same bound holds.

\appendix
\section{Details of \textbf{Find-Conflict}}
\label{appendix:Appendix A}
Here, we will give an explicit description of the quantum subroutine \textbf{Find-Conflict} for completeness. Recall that for a given set $S \subset V$ and some $v \in V$ we wish to determine if there exists $u \in S$ such that $(u, v) \in E$ with probability at least $\frac{2}{3}$. This algorithm is described in \cite{tight_bounds}.

\begin{algorithm}[H]
\SetAlgoLined
Let $U_O$ be the unitary oracle transformation\\
\textbf{for} $t = 1, 2:$\\
   \ \ \ \  $|\phi\rangle \gets \frac{1}{\sqrt{|S|}}\sum_{u \in S} |uv \rangle$\\
    \ \ \ \  $U_{\phi} = 2|\phi\rangle \langle \phi | - I$\\
   \ \ \ \  Apply $U_O$ and then $U_{\phi}$ to $|\phi \rangle$ each $\frac{\pi}{4}\sqrt{\frac{N}{t}}$ times\\
   \ \ \ \ Observe $| \phi \rangle$ resulting in candidate edge $(u, v)$\\
   \ \ \ \ Make the classical edge query $(u, v)$\\
   \ \ \ \ \textbf{if} $M[u, v] = 1$:\\
   \ \ \ \ \ \ \ \ \textbf{return} \textbf{True}\\
 \textbf{return} \textbf{False}
\caption{\textbf{Find-Conflict}$(v, S)$}
\end{algorithm}
\begin{proof}
 If there are exactly $1$ or $2$ conflicting neighbors for $v$ in $S$ then $u$ will be one of them with high probability when $t = 1$ or $t = 2$ respectively. When there are no conflicts the algorithm will always return \textbf{False}. The only worrisome case is when there are $3$ or more conflicts for $v$ in $S$. However, when there are $t_0 > 2$ conflicts each of the searches with $t = 1, 2$ will still succeed with probability at least $1/2$, resulting in an error probability of at most $\frac{1}{4}$. $O(\sqrt{|S|})$ queries are used every iteration, so $O(\sqrt{|S|})$ queries are used in total. See Theorem 3 of \cite{tight_bounds} for more details and \cite{Boyer_98} for general reference.
\end{proof}

\newpage

\bibliographystyle{plainurl}
\bibliography{sample}

\end{document}